\title{Exact Inference with Latent Variables \\ in an Arbitrary Domain}
\author{
  \textbf{Chuyang Ke}\\Department of Computer Science\\Purdue University\\\texttt{cke@purdue.edu}
  \and 
  \textbf{Jean Honorio}\\Department of Computer Science\\Purdue University\\\texttt{jhonorio@purdue.edu}
}
\let\emph\textit
\def\R{{\mathbb{R}}}
\def\1{\textbf{1}}
\newcommand{\econst}{\mathrm{e}}
\newcommand{\onevct}{\vct{1}}
\newcommand{\onemtx}{\onevct_n \onevct_n^\top}
\newcommand{\zeromtx}{0}
\newcommand{\Imtx}{\mathbf{I}}
\newcommand{\onemtxm}{\onevct_{m} \onevct_{m}^\top}
\newtheorem{theorem}{Theorem}
\newtheorem{corollary}{Corollary}
\newtheorem{lemma}{Lemma}
\newtheorem{definition}{Definition}
\newtheorem{claim}{Claim}
\newcommand{\Prob}[2][]{\mathbb{P}_{#1}\left\{ {#2} \right\}}
\newcommand{\Expect}[2][]{\mathbb{E}_{#1}\left[ #2 \right]}
\newcommand{\Var}[2][]{\operatorname{Var}_{#1}\left[ #2 \right]}
\newcommand{\abs}[1]{\left\vert {#1} \right\vert}
\newcommand{\norm}[1]{\left\Vert {#1} \right\Vert}
\newcommand{\normsq}[1]{{\norm{#1}}^2}
\newcommand{\normf}[1]{{\norm{#1}}_F}
\newcommand{\diag}[1]{\operatorname{diag}{#1}}
\newcommand{\tr}[1]{\operatorname{tr}{#1}}
\newcommand{\rank}[1]{\operatorname{rank}{#1}}
\newcommand{\kl}[1]{D_{\text{KL}}{#1}}
\newcommand{\st}{\operatorname*{subject\; to}}
\newcommand{\maximize}{\operatorname*{maximize}}
\newcommand{\minimize}{\operatorname*{minimize}}
\newcommand{\vct}[1]{{#1}}
\newcommand{\mtx}[1]{{#1}}
\newcommand{\lmax}{\lambda_{\max}}
\newcommand{\lsec}{\lambda_{2}}
\newcommand{\Snp}{{\mathcal{S}_{+}^n}}
\newcommand{\Rnp}{{\R_{+}^n}}
\newcommand{\Rmp}{{\R_{+}^m}}
\newcommand{\sphere}{{\mathbb{S}^{n-1}}}
\newcommand{\my}{\mtx{Y}}
\newcommand{\mya}{\mtx{Y}^\ast}
\newcommand{\mw}{\mtx{W}}
\newcommand{\ma}{\mtx{W}}
\newcommand{\mx}{\mtx{X}}
\newcommand{\mall}{\mtx{WX}}
\newcommand{\inprod}[2][]{\left\langle {#1},{#2} \right\rangle}
\newcommand{\spa}[1]{\operatorname*{span}\left({#1}\right)}
\date{}
\begin{document}
\maketitle

\begin{abstract}
We analyze the necessary and sufficient conditions for exact inference of a latent model. In latent models, each entity is associated with a latent variable following some probability distribution. The challenging question we try to solve is: can we perform exact inference without observing the latent variables, even without knowing what the domain of the latent variables is? We show that exact inference can be achieved using a semidefinite programming (SDP) approach without knowing either the latent variables or their domain. Our analysis predicts the experimental correctness of SDP with high accuracy, showing the suitability of our focus on the Karush-Kuhn-Tucker (KKT) conditions and the spectrum of a properly defined matrix. As a byproduct of our analysis, we also provide concentration inequalities with dependence on latent variables, both for bounded moment generating functions as well as for the spectra of matrices. To the best of our knowledge, these results are novel and could be useful for many other problems. 
\end{abstract}

\allowdisplaybreaks


\section{Introduction} 
Generative network models have become a powerful tool for researchers in various fields, including data mining, social sciences, and biology \citep{goldenberg2010survey,fortunato2010community}. With the emergence of social media in the past decade, researchers are now exposed to millions of records of interaction generated on the Internet everyday. One can note that the generic structure and organization of social media resemble certain network models, for instance, the Erdos-Renyi model, the stochastic block model, the latent space model, the random dot product model \citep{goldenberg2010survey, newman2002random, young2007random}. The analogy comes from the fact that, in a social network each user can be modeled as an \emph{entity}, and the interaction of users can be modeled as edges. One common assumption is that nodes belong to different groups. In social networks this can be users' political view, music genre preferences, or whether the user is a cat or dog person. Another common assumption, often referred to as \emph{homophily} in prior literature, suggests that entities from the same group are more likely to be connected with each other than those from different groups \citep{goldenberg2010survey,hoff2008modeling, krivitsky2009representing}. The core task of \emph{inference}, also known as graph partitioning, is to partition the nodes into groups based on the observed interaction information \citep{abbe2017community, ke2018information, fortunato2010community}.  

In this paper, we are particularly interested in the class of \emph{latent models} beyond graphs, with latent variables in arbitrary domains.
In a latent model, every entity belongs to one of $k$ groups. 
Every entity is associated with a latent variable in some arbitrary latent domain. It is natural to assume that for entities from the same group, their associated latent variables follow the same probability distribution. The latent model is equipped with a function to measure the homophily of two latent variables. Finally, two entities have some affinity score depending on their homophily in the latent domain. In other words, similar entities are more likely to have a higher affinity score. We want to highlight that, for the particular case of binary (i.e., $\{0,1\}$) affinity scores, the latent model is a random graph model. The challenging problem problem we try to solve is to infer the true group assignments \emph{without} observing the the latent variables nor knowing the latent domain.

In the past decade there have actually existed a large amount of literature on network models, and most focus on the class of \emph{fully observed models}, for example, the Erdos-Renyi Model, and the Stochastic Block Model. These models are called ``fully observed'', because there are no latent variables, and edges are generated based on the agreement of entity labels. 
Some efficient algorithms have also been proposed for inference in these fully observed models \citep{abbe2016exact, bandeira2018random, hajek2016achieving, chen2014statistical}.  
On the other hand, there is limited research on the class of latent models. Researchers have motivated various network models with latent variables, including the latent space model \citep{hoff2002latent}, the exchangeable graph model \citep{goldenberg2010survey}, the dot product model \citep{nickel2008random}, the uniform dot product model \citep{young2007random}, and the extremal vertices model \citep{daudin2010model}. However to the best of our knowledge, no efficient polynomial time algorithms with formal guarantees have been proposed or analyzed for \emph{exact inference} in latent models. 

In this paper we address the problem of exact inference in latent models with arbitrary domains. More specifically, our goal is to correctly infer the group assignment of every entity in a latent model without observing the latent variables or the latent domain. We also propose a polynomial-time algorithm for exact inference in latent models using semidefinite programming (SDP). 
We want to highlight that many techniques used in the analysis of fully observed models do not directly apply to latent models. This is because in latent models, affinities are no longer statistically independent. As a result, latent models are more challenging to analyze than fully observed models, such as the stochastic block model.

While SDP has been heavily proposed for different machine learning problems, our goal in this paper is to study the optimality of SDP for our more challenging model. Our analysis focuses on Karush-Kuhn-Tucker (KKT) conditions and the spectrum of a carefully constructed primal-dual certificate. For convex problems including SDPs, the KKT conditions are sufficient and necessary for strong duality and optimality \citep{boyd2004convex}.
To the best of our knowledge, we are providing the first polynomial time method for a generally computationally hard problem with formal guarantees. In general, problems involving latent variables are computationally hard and nonconvex, for instance, learning restricted Boltzmann machines \citep{long2010restricted} or structural Support Vector Machines with latent variables \citep{yu2009learning}. 
It is worth mentioning that theoretical computer science typically assumes arbitrary inputs ("worst-case" computationally hard), whereas we assume inputs are generated by a probabilistic generative model. Our results could be seen as "average-case" polynomial time: we provide exact inference conditions with respect to the model parameters $(p,q)$.

\textbf{Summary of our contributions.} We provide a series of novel results in this paper:
\begin{itemize}
    \item We propose the definition of the latent model class, which is highly general and subsumes several latent models from prior literature (see Table \ref{tab:my-table}).
    \item We provide the first polynomial time algorithm for a generally computationally hard problem with formal guarantees. We also analyze the sufficient conditions for exact inference in latent models using a semidefinite programming approach. 
    \item For completeness, we provide an information-theoretic lower bound on exact inference, and we analyze when nonconvex maximum likelihood estimation is correct.
    \item As a byproduct of our analysis, we provide concentration inequalities with dependence on latent variables, both for bounded moment generating functions as well as for the spectra of matrices. To the best of our knowledge, these results are novel and could be useful for many other problems.
\end{itemize}
\begin{table*}[t]
\centering
\begin{tabular}{|c|c|c|}
\hline
Models                        & $\mathcal{X}$                                                           & $f(x_i,x_j)$ \\ \hline
Latent space model \citep{hoff2002latent}           & $\R^d$                                                                  & $\exp(-\normsq{x_i-x_j})$ \\ \hline
Exchangeable graph model \citep{goldenberg2010survey}           & $\{0,1\}^d$                                                             & $\exp(-\norm{x_i-x_j}_1)$ \\ \hline
Dot product graph (DPG) \citep{nickel2008random}           & $\R^d$                                                                  & $g(x_i \cdot x_j)$ \\ \hline
Uniform DPG \citep{young2007random}   & $[0,1]^d$                                                               & $g(x_i \cdot x_j)$ \\ \hline
Extremal vertices model \citep{daudin2010model}                          & $\{ x \mid x \in \R^d,  x_i \geq 0, \sum_{i=1}^d x_i = 1 \}$ & $g(x_i \cdot x_j)$ \\ \hline
Kernel latent variable model & $\R^d$, sets, graphs, text, etc.                                                                     & $ g(K(x_i,x_j))$ \\ \hline
\end{tabular}
\caption{Comparison of various latent models. In dot product models, $g:\R\to [0,1]$ is a function that normalizes dot products to the range of $[0,1]$. In kernel models, $K : \mathcal{X} \times \mathcal{X} \to \R$ is an arbitrary kernel function.}
\label{tab:my-table}
\end{table*}

\section{Preliminaries}
In this section, we introduce the notations that will be used in later sections. First we provide the definition of the class of latent models.

\begin{definition}[Class of latent models]
A model $\mathcal{M}$ is called a \emph{latent model} with $n$ entities and $k$ clusters, if $\mathcal{M}$ is equipped with structure $(\mathcal{X}, f, \mathcal{P})$ satisfying the following properties:
\begin{itemize}
    \setlength\itemsep{0em}
    \item $\mathcal{X}$ is an arbitrary latent domain;
    \item $f: \mathcal{X} \times \mathcal{X} \to [0,1]$ is a homophily function, such that $f(\vct{x}, \vct{x}') = f(\vct{x}', \vct{x})$;
    \item $\mathcal{P} = (\mathcal{P}_1, \dots, \mathcal{P}_k)$ is the collection of $k$ distributions with support on $\mathcal{X}$.
\end{itemize}
For simplicity we consider the balance case in this paper: each cluster has the same size $m := n/k$. Let $Z^\ast \in \{0,1\}^{n \times k}$ be the true cluster assignment matrix, such that $Z_{ij}^\ast = 1$ if entity $i$ is in cluster $j$, and $Z_{ij}^\ast = 0$ otherwise. For every entity $i$ in cluster $j$, nature randomly generates a latent vector $x_i \in \mathcal{X}$ from distribution $P_j$. 
A random observed affinity matrix $W \in [0,1]^{n \times n}$ is generated, such that the conditional expectation fulfills $\Expect{W_{ij} | x_i, x_j} = f(x_i,x_j)$.
\label{def:class}
\end{definition}

\textbf{Remark.}
We use $[0,1]$ for $f$ and $W$ for clarity of exposition. Our results can be trivially extended to a general domain $[0,B]$ for $B>0$ using the same techniques in later sections.

\textbf{Remark.} 
A particular case of the latent model is a random graph model, in which every entry $W_{ij}$ in the affinity matrix is binary (i.e., $W_{ij} \in \{0,1\}$) and generated from a Bernoulli distribution with parameter $f(x_i, x_j)$.

\textbf{Our definition of latent models is highly general.} In Table \ref{tab:my-table}, we illustrate several latent models motivated from prior literature that can be subsumed under our model class by properly defining $\mathcal{X}$ and $f$. 

\textbf{In latent models, affinities are not independent if not conditioning on the latent variables.} For example, suppose $i, j$ and $k$ are three entities. In fully observed models the affinities $W_{ij}$ and $W_{ik}$ are independent, but this is not true in latent models, as shown graphically in Figure \ref{fig:models}. This motivates our following definition of \emph{latent conditional independence (LCI)}.

\begin{definition}[Latent Conditional Independence]
We say random variables $Y = (y_1,\dots, y_n)$ are \emph{latently conditional independent} given $X$, if $y_1, \dots, y_n$ are conditional independent given the unobserved latent random variable $X$.
\end{definition}

\begin{figure*}[t]
    \centering
    \includegraphics[width=0.9\textwidth]{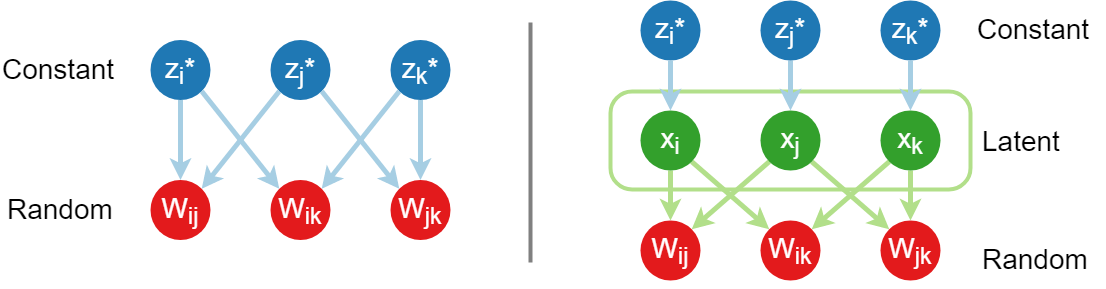}
    \caption{Comparison of fully observed models (left) and latent models (right). The blue nodes are the true (constant) labels $Z^\ast$. The green nodes are the unobserved latent random variables $\mx$ in our model. The red nodes are entries of the observed matrix $\ma$. In latent models, affinities are not independent without conditioning on latent variables. We say $W$'s are latently conditional independent given $X$.}
    \label{fig:models}
\end{figure*}

\subsection{Notations}

We denote $[n] := \{1,2,\dots, n\}$. We use $\Snp$ to denote the $n$-dimensional positive semidefinite matrix cone, and $\Rnp$ to denote the $n$-dimensional nonnegative orthant.

For simplicity of analysis, we use $z_i \in \{0,1\}^k$ to denote the $i$-th row of $Z$, and $z^{(i)}\in \{0,1\}^n$ to denote the $i$-th column of $Z$. We use $X = (x_1, \dots, x_n)$ to denote the collection of latent variables.

Regarding eigenvalues of matrices, we use $\lambda_i(\cdot)$ to refer to the $i$-th smallest eigenvalue, and $\lmax(\cdot)$ to refer to the maximum eigenvalue.

Regarding probabilities $\Prob[\ma]{\cdot}, \Prob[\mx]{\cdot}$, and $\Prob[\mall]{\cdot}$, the subscripts indicate the random variables. Regarding expectations $\Expect[\ma]{\cdot}, \Expect[\mx]{\cdot}$, and $\Expect[\mall]{\cdot}$, the subscripts indicate which variables we are averaging over.
We use $\Prob[\ma]{\cdot \mid \mx}$ to denote the conditional probability with respect to $\ma$ given $\mx$, and $\Expect[\ma]{\cdot \mid \mx}$ to denote the conditional expectation with respect to $\ma$ given $\mx$.

For matrices, we use $\norm{\cdot}$ to denote the spectral norm of a matrix, and $\normf{\cdot}$ to denote the Frobenius norm. We use $\tr(\cdot)$ to denote the trace of a matrix, and $\rank(\cdot)$ to denote the rank. We use the notation $\diag(a_1,\dots,a_n)$ to denote a diagonal matrix with diagonal entries $a_1,\dots,a_n$. We also use $\Imtx$ to refer to the identity matrix, and $\onevct_n$ to refer to an all-one vector of length $n$. We use $\sphere$ to denote the unit $(n-1)$-sphere.

Let $S_i \in [n]^m$ denote the index set of the $i$-th cluster. For any vector $v \in \R^n$, we define $v_{S_i}$ to be the subvector of $v$ on indices $S_i$. Similarly for any matrix $X \in \R^{n\times n}$, we define $X_{S_i S_j}$ to be the submatrix of $X$ on indices $S_i \times S_j$. Denote the shorthand notation $X_{S _i} := X_{S _i S_i}$.

Define $d_i(S_l) := \sum_{j\in S_l} W_{ij}$ to be the degree of entity $i$ with respect to cluster $l$. Define shorthand notation $d_i$ to be the degree of entity $i$ with respect to its own cluster. Algebraically, we have $d_i := \sum_j W_{ij} z_i^{\ast\top} z_j^\ast$. We also denote $D := \diag(d_1, \dots, d_n)$.

In the following sections we will frequently use the expected values related to the observed affinity matrix $W$. It would be tedious to derive every expression from $(\mathcal{X},f,\mathcal{P})$. To simplify this, we introduce the following induced model parameters, which will be used throughout the paper.
\begin{definition}[Induced model parameters]
In a latent model $\mathcal{M}$ equipped with structure $(\mathcal{X}, f, \mathcal{P})$, one can derive the induced parameters $(p, q)$ defined as
\begin{align*}
p := \Expect[\mtx{X}]{f(\vct{x}_i, \vct{x}_j) \mid z^\ast_i = z^\ast_j} \,, \qquad
q := \Expect[\mtx{X}]{f(\vct{x}_i, \vct{x}_j) \mid z^\ast_i \neq z^\ast_j} \,.
\end{align*}
Note that both $p,q \in [0,1]$.
\label{def:induced}
\end{definition}


\subsection{LCI Concentration Inequalities}
Here we provide new concentration inequalities with dependence on latent variables, both for bounded moment generating functions as well as for the spectra of matrices.

\begin{lemma}[LCI Tail Bound]
Consider a finite sequence of random variables $\{x_i\}$ that are LCI given $Y$. Assume that $\Expect[x_i Y]{x_i} = \mu_i$, and $\Expect[x_i]{\econst^{t (x_i - \mu_i)} \mid Y} \leq \econst^{t^2 \sigma_i^2 / 2}$ for all $Y$.
Then for all positive $\epsilon$,
\[
\Prob[XY]{\sum_i (x_i - \mu_i) \geq \epsilon}
\leq 
\exp\left( -\frac{\epsilon^2}{2\sum_i \sigma_i^2} \right) \,.
\]
\label{thm:LCI_tail}
\end{lemma}

\begin{corollary}[LCI Hoeffding's Inequality]
Consider a finite sequence of random variables $\{x_i\}$ that are LCI given $Y$. Assume that  and $x_i \in [a_i,b_i]$ almost surely, and $\Expect[x_i Y]{x_i} = \mu_i$.
Then for all positive $\epsilon$,
\begin{align*}
\Prob[XY]{\sum_i (x_i - \mu_i) \geq \epsilon}   
\leq& \exp\left( -\frac{2\epsilon^2}{\sum_i (b_i-a_i)^2} \right) \,.
\end{align*}
\label{cor:LCI_hoeffding}
\end{corollary}

\begin{corollary}[LCI Bernstein Inequality]
Consider a finite sequence of random variables $\{x_i\}$ that are LCI given $Y$. Assume that $\abs{x_i} \leq M$ almost surely, $\Expect[x_i]{x_i \mid Y} = 0$, and $\Var[x_i]{x_i \mid Y} \leq \nu_i^2$  for all $Y$.
Then for all positive $\epsilon$,
\[
\Prob[XY]{\sum_i x_i \geq \epsilon}
\leq 
\exp\left( -\frac{\epsilon^2 / 2}{\sum_i \nu_i^2 + M\epsilon/3} \right) \,.
\]
\label{cor:LCI_bern}
\end{corollary}

\begin{lemma}[LCI Matrix Tail Bound]
Consider a finite sequence of random symmetric matrices $\{X_i\}$ with dimension $d$ that are LCI given $Y$. Assume there is a function $g:(0,\infty) \to [0, \infty]$ and a sequence $\{A_i\}$ of fixed symmetric matrices that satisfy the relations
$
\Expect[X_i]{\econst^{\theta X_i} \mid Y} \preceq \econst^{g(\theta) \cdot A_i}
$
for $\theta > 0$ and for all $Y$. Define the scale parameter 
$
\rho := \lmax\left(\sum_i A_i\right)
$.
Then for all positive $\epsilon$,
\begin{equation*}
\Prob[XY]{\lmax\left(\sum_i X_i\right) \geq \epsilon }
\leq d \cdot \inf_{\theta > 0} \econst^{-\theta \epsilon + g(\theta) \cdot \rho} \,.
\end{equation*}
\label{corollary_3.7}
\end{lemma}

\begin{corollary}[LCI Matrix Bernstein Inequality]
Consider a finite sequence of random symmetric matrices $\{X_i\}$ with dimension $d$ that are LCI given $Y$. Assume that $\Expect[X_i]{X_i \mid Y} = \zeromtx$ for all $Y$, and $\lmax(X_i) \leq R$ almost surely. 
Also assume that the norm of the total variance $\norm{\sum_i \Expect[X_i]{X_i^2\mid Y}} \leq \sigma^2$ for all $Y$. 
Then Then for all positive $\epsilon$,
\begin{equation*}
\Prob[XY]{\lmax\left(\sum_i X_i\right) \geq \epsilon}
\leq d \cdot \exp\left( -\frac{\epsilon^2/2}{\sigma^2+R\epsilon/3} \right) \,.
\end{equation*}
\label{thm:LCI_mtxBern}
\end{corollary}
\section{Polynomial-Time Regime with Semidefinite Programming}
In this section we investigate the sufficient conditions for exactly inferring the group assignment of entities in latent models. An algorithm achieves \emph{exact inference} if the recovered group assignment matrix $Z \in \{0,1\}^{n\times k}$ is identical to the true assignment matrix $Z^\ast$ up to permutation of its columns (without prior knowledge it is impossible to infer the order of groups). 

\textbf{Overview of the proof.} 
Our proof starts by looking at a \emph{maximum likelihood estimation (MLE)} problem \eqref{opt:mle2}, which cannot be solved efficiently (for more details see Section 4). We relax the MLE problem \eqref{opt:mle2} to problem \eqref{opt:mleY} (matrix-form relaxation), then to problem \eqref{opt:primal} (convex SDP relaxation). We ask under what conditions the relaxation holds (i.e., returns the groundtruth).
Our analysis proves that, if the statistical conditions in Theorem \ref{sdp:theorem} are satisfied, by solving the relaxed convex optimization problem \eqref{opt:primal}, one can recover the true group assignment $Z^\ast$ perfectly and efficiently with probability tending to $1$. 

Our analysis can be broken down into two parts. In the first part we demonstrate that the exact inference problem in latent models can be relaxed to a semidefinite programming problem. It is well-known that SDP problems can be solved efficiently \citep{boyd2004convex}. Motivated by \citep{amini2018semidefinite} we employ \emph{Karush-Kuhn-Tucker (KKT) conditions} in our proof to construct a pair of primal-dual certificates, which shows that the SDP relaxation leads to the optimal solution under certain deterministic spectrum conditions. In the second part we analyze the statistical conditions for exact inference to succeed with high probability.

\subsection{SDP Relaxation}
We first consider a maximum likelihood estimation approach to recover the true assignment $Z^\ast$. 
The use of MLE in graph partitioning and community detection literature is customary \citep{bandeira2018random, abbe2016exact, chen2014statistical}. The motivation is to find cluster assignments, such that the number of edges within clusters is maximized.
Recall that $z_i \in \{0,1\}^k$ is the $i$-th row of $Z$, and $z^{(i)}\in \{0,1\}^n$ is the $i$-th column of $Z$. Given the observed matrix $W$, the goal is to find a binary assignment matrix $Z$, such that
$\sum_{i,j} W_{ij} z_i^\top z_j$
is maximized. In the matrix form, MLE can be cast as the following optimization problem:
\begin{align}
\maximize_{Z} \qquad & \langle W, ZZ^\top\rangle  \nonumber\\
\st \qquad &Z \in \{0,1\}^{n\times k} , \quad
 Z^\top \onevct_n = m  \onevct_k , \quad
Z\onevct_k = \onevct_n \,.
\label{opt:mle2}
\end{align}
where the last two constraints enforce that each entity is in one of the $k$ groups, and each group has size $m = n/k$. 

Problem \eqref{opt:mle2} is nonconvex and hard to solve because of the $\{0,1\}$ constraint. In fact, in the case of two clusters ($k=2$) and $0$-$1$ weights, the MLE formulation reduces to the Minimum Bisection problem, which is known to be NP-hard \citep{GAREY1976237}. 
To relax it, we introduce the cluster matrix $Y = ZZ^\top$. One can see that $Y$ is a rank-$k$, $\{0,1\}$ positive semidefinite matrix. Each entry is $1$ if and only if the corresponding two entities are in the same group ($z_i = z_j$). Similarly we can define $Y^\ast = Z^\ast Z^{\ast\top}$ for the true cluster matrix. Then the optimization problem becomes
\begin{align}
\maximize_{Y} \qquad & \langle W, Y\rangle  \nonumber\\
\st \qquad 
& Y_{ii} = 1\,, \forall i \in [n] ,\quad
 Y \onevct_n = m  \onevct_n ,\quad
 \onevct_n^\top Y = m  \onevct_n^\top \nonumber \\
& Y \succeq_\Snp 0 ,\quad
 Y \succeq_\Rnp 0 ,\quad
 \rank(Y) = k \,.
\label{opt:mleY}
\end{align}
Problem \eqref{opt:mleY} is still nonconvex because of the rank constraint. By dropping this constraint, we obtain the main SDP problem:
\begin{align}
\maximize_{Y} \qquad & \langle W, Y\rangle  \nonumber\\
\st \qquad 
& Y_{ii} = 1, \forall i \in [n] ,\quad
 Y \onevct_n = m  \onevct_n ,\quad
\onevct_n^\top Y = m  \onevct_n^\top \nonumber \\
& Y \succeq_\Snp 0 ,\quad
 Y \succeq_\Rnp 0 \,.
\label{opt:primal}
\end{align}

Problem \eqref{opt:primal} is now convex and can be solved efficiently. A natural question is: under what circumstances the optimal solution to \eqref{opt:primal} will match the solution to the original problem \eqref{opt:mle2}? To answer the question, we take a primal-dual approach. One can easily see there exists a strictly feasible $Y$ for the constraints in \eqref{opt:primal}. Thus Slater's condition guarantees \emph{strong duality} \citep{boyd2004convex}. We now proceed to derive the dual problem. 

\begin{lemma}[Lagrangian Dual]
The dual problem of \eqref{opt:primal} is
\begin{align}
\minimize_{v,A,\Gamma} \qquad & \tr(A) + 2m v^\top \onevct_n  \nonumber\\
\st \qquad 
& A-W + v\onevct_n^\top + \onevct_n v^\top - \Gamma \succeq_\Snp 0 \nonumber\\
& A \text{ is diagonal},\quad
 \Gamma_{S_i} = 0 \,, \forall i \in [k] ,\quad
 \Gamma \succeq_\Rnp 0 \,.
\label{opt:dual}
\end{align} 
\label{lemma:dual}
\end{lemma}


We now construct the primal-dual certificates to close the duality gap between problem \eqref{opt:primal} and \eqref{opt:dual}.
\begin{lemma}[Primal-dual Certificates]
Let $P := \Imtx - \frac{1}{m}\onemtxm$ to be the projection onto the orthogonal complement of $\spa{\onevct_m}$. 
By setting the dual variables as follows
\begin{align*}
&v = \frac{\phi}{2} \onevct_n ,\quad
A = D - m\phi \Imtx \,, \\
&\Gamma_{S_i} = 0, \forall i\in [k] ,\quad
\Gamma_{S_i S_j} = \phi\onemtxm + PW_{S_i S_j}P - W_{S_i S_j}, \forall i \neq j \,, 
\end{align*}
where $\phi \in \R$ is a constant to be determined later, the duality gap between \eqref{opt:primal} and \eqref{opt:dual} is closed.
\label{lemma:certificates}
\end{lemma}

It remains to verify feasibility of the dual constraints in \eqref{opt:dual}. It is trivial to verify that $A = D - m\phi \Imtx$ is diagonal, and $\Gamma_{S_i} \succeq_\Rmp 0$. We now summarize the dual feasibility conditions.
\begin{lemma}[Dual Feasibility]
Let $v, A, \Gamma$ be defined as in Lemma \ref{lemma:certificates}. If 
\begin{equation}
\Lambda := D-m\phi \Imtx - W + \phi \onemtx - \Gamma \succeq_\Snp 0 \,,
\end{equation}
and 
\begin{equation}
\Gamma_{S_i S_j} \succeq_\Rmp 0
\end{equation}
for every $i,j \in [k]$ with $i\neq j$, then the dual constraints in \eqref{opt:dual} are satisfied.
\label{lemma:df}
\end{lemma}

\textbf{We also require the optimal solution to be unique.} This means $Y^\ast = Z^\ast Z^{\ast\top}$ should be the only optimal solution to problem \eqref{opt:primal}. To do so we look into the eigenvalues of $\Lambda$ defined in Lemma \ref{lemma:df}. It is easy to verify that every $z^{\ast(i)}$ is an eigenvector of $\Lambda$ with $\Lambda z^{\ast(i)} = 0$.
To ensure uniqueness, it is sufficient to require that all other $n-k$ eigenvalues of $\Lambda$ are strictly positive. 
We now provide the following lemma about uniqueness.

\begin{lemma}[Uniqueness]
The convex relaxed problem \eqref{opt:primal} achieves exact inference and outputs the unique optimal solution $Y = Y^\ast = Z^\ast Z^{\ast\top}$, if 
\begin{equation}
\lambda_{k+1} (\Lambda) > 0 \,.
\label{eq:eigen_condition}
\end{equation}
\label{lemma:uniqueness}
\end{lemma}

\textbf{Remark.} Why is the requirement of uniqueness reasonable? Because our latent models are generative, i.e., the ground truth $Z^\ast$ is unique and generates everything, including the latent variables $X$ and the observed matrix $W$ (see Figure \ref{fig:models}). From the perspective of optimization, in some cases there may exist multiple optimal solutions, but we are only interested in the cases in which the preexisting groundtruth $Z^\ast$ is returned. In fact, the requirement of uniqueness is customary in generative models \citep{abbe2016exact,bandeira2018random,chen2014statistical}.


Combining the results above, we now give the sufficient conditions for exact inference.

\begin{lemma}[Deterministic Sufficient Conditions]
Let $v, A, \Gamma$ be defined as in Lemma \ref{lemma:certificates}. If 
\begin{align}
& \Gamma_{S_i S_j} \succeq_\Rmp 0 \label{eq:entrygeq0}
\end{align}
for every $i,j\in [k]$ with $i \neq j$, and
\begin{align}
&\lambda_{k+1} (\Lambda) > 0 \,, \label{eq:eigenk+1}
\end{align}
then $Y^\ast = Z^\ast Z^{\ast\top}$ is the unique primal optimal solution to \eqref{opt:primal}, and $(v, A, \Gamma)$ is the dual optimal solution to \eqref{opt:dual}.
\label{theorem:sdp_deterministic}
\end{lemma}

Note that Lemma \ref{theorem:sdp_deterministic} gives the deterministic condition for our SDP relaxation to succeed. In the following two sections, we characterize the statistical conditions for \eqref{eq:entrygeq0} and \eqref{eq:eigenk+1} to hold with probability tending to $1$.


\subsection{Entrywise Nonnegativity of $\Gamma$}
In this section we analyze the statistical conditions for \eqref{eq:entrygeq0} to hold with high probability. From Lemma \ref{lemma:certificates} it follows that 
$\Gamma_{S_i S_j} = \phi\onemtxm + PW_{S_i S_j}P - W_{S_i S_j}, \forall i \neq j$. To ensure dual feasibility, it is necessary to ensure that every entry in $\Gamma_{S_i S_j}$ is nonnegative with high probability by setting a proper $\phi$.

We now present the condition for \eqref{eq:entrygeq0} to hold with high probability.
\begin{lemma}[Choice of $\phi$]
If $\phi \geq q + O\left(\sqrt{\frac{k\log n}{n}}\right)$, then $\Gamma_{S_i S_j} \succeq_\Rmp 0$ holds for every $i,j\in [k]$ with probability at least $1 - O\left(\frac{1}{n}\right)$.
\label{lemma:choice_phi}
\end{lemma}

\textbf{Remark.} To ensure nonnegativity, one may think about setting $\phi$ to be some sufficiently large constant (for example, set $\phi = 2$). This is not going to work, however, as the choice of $\phi$ also plays a critical role in the analysis of \eqref{eq:eigenk+1} in the next section. In order to obtain a tighter final result, it is necessary to pick the smallest possible $\phi$, without breaking the nonnegativity of $\Gamma$. For further details see Lemma \ref{lemma:sdp_expectation}.


\subsection{Statistical Conditions of Efficient Inference}
In this section we analyze the statistical conditions for \eqref{eq:eigenk+1} to hold with high probability. 
To do so, we first look at the expectation of $\Lambda$.

\begin{lemma}
It follows that
\begin{equation}
\lambda_{k+1} \left(\Expect[WX]{\Lambda}\right) = m (p-\phi) \,.    
\end{equation}
\label{lemma:sdp_expectation}
\end{lemma}

\textbf{Remark.} The expectation above shows why the choice of $\phi$ matters. With a larger $\phi$, one has less degree of freedom to work with, in terms of the concentration inequalities.

The next step is to show that the eigenvalue of $\Lambda$ will not deviate too much from its expectation, so that $\lambda_{k+1}(\Lambda)$ is greater than $0$ with high probability. In fact we have the following lemma.

\begin{lemma}
Assuming that $\phi < p$. To prove \eqref{eq:eigenk+1} holds with high probability, it is sufficient to prove
\begin{align}
\min_i (d_i - \Expect[WX]{d_i}) + \frac{m}{2} (p-\phi) &> 0 
\label{eq:d_con}
\end{align}
and
\begin{align}
-\lmax(W - \Expect[WX]{W}) + \frac{m}{2} (p-\phi) &> 0
\label{eq:w_con}
\end{align}
hold with high probability.
\label{lemma:dw}
\end{lemma}


We now present the statistical conditions for exact inference of latent models using semidefinite programming.
\begin{theorem}
In a latent model of $k$ clusters and $n$ entities, and with induced parameters $(p, q)$ as in Definition \ref{def:induced}, if 
\[
\frac{(p-q)^2}{k^2} =  \Omega\left(\frac{\log n}{n}\right) \,, 
\]
then the SDP-relaxed problem \eqref{opt:primal} achieves exact inference, i.e., $Y = Y^\ast = Z^\ast Z^{\ast\top}$, with probability at least $1 - O\left(\frac{1}{n}\right)$.
\label{sdp:theorem}
\end{theorem}

\section{Additional Analysis}
In this section, for completeness, we also provide an information-theoretic lower bound on exact inference (i.e., the impossible regime), and we analyze when (nonconvex) maximum likelihood estimation is correct (i.e., the hard regime).
\subsection{Impossible Regime}
In this section we analyze the necessary conditions for exact inference of latent models. Our goal is to characterize the information-theoretic lower limit of any algorithm for inferring the true labels $Z^\ast$ in our model. More specifically, we would like to infer labels $\hat{Z}$ given the observation of the adjacency matrix $W$. Also note that we do not observe the collection of latent variables $\mtx{X}$. We present the following information-theoretic lower bound for our model.

\begin{claim}
Let $Z^\ast$ be the true assignment matrix sampled uniformly at random. In a latent model of $k$ clusters and $n$ entities, and with induced parameters $(p, q)$ as in Definition \ref{def:induced}, if 
\[
\frac{p}{k}\log(p/q) = O\left(\frac{1}{n}\right) \,, 
\]
then the probability of error $\Prob{\hat{Z} \neq Z^\ast} \geq 1/2$, for any algorithm that a learner could use for picking $\hat{Z}$. 
\label{fano:theorem}
\end{claim}
\subsection{Hard Regime with Maximum Likelihood Estimation}

In this section we analyze the conditions for exact inference of the true labels in latent models using nonconvex maximum likelihood estimation by solving optimization problem \eqref{opt:mle2}.
We call this the hard regime because without some convex relaxation, enumerating $Z$ takes $O(k^n)$ iterations. The problem can be rewritten in the following square matrix form:
\begin{align}
\maximize_{Y \in \mathcal{Y}} \qquad & \langle W, Y\rangle  \,,
\label{nphard:primal}
\end{align}
where 
\[
\mathcal{Y} = \lbrace ZZ^\top \mid Z \in \{0,1\}^{n\times k}, Z^\top \onevct_n = \frac{n}{k}  \onevct_k, Z\onevct_k = \onevct_n \rbrace \,,
\]
is the space of all feasible solutions.
We now state the conditions for exact inference of latent models using maximum likelihood estimation.
\begin{claim}
In a latent model of $k$ clusters and $n$ entities, and with induced parameters $(p, q)$ as in Definition \ref{def:induced}, if
\[
\frac{(p-q)^2}{k} = \Omega\left(\frac{\log n}{n}\right) \,, 
\]
then maximum likelihood estimation \eqref{nphard:primal} achieves exact inference, i.e., $Y = Y^\ast = Z^\ast Z^{\ast\top}$, with probability at least $1- O\left(\frac{1}{n}\right)$.
\label{hard:theorem}
\end{claim}

\section{Experiments}
We validate our theoretical findings through experiments. We run synthetic experiments for the latent space model, the exchangeable graph model, and the kernel latent variable model. We also test our algorithm in a real-world dataset in which assumptions might not necessarily hold. See Appendix for details.


\bibliography{0_main.bib}
\bibliographystyle{plain}

\onecolumn
\newpage
\appendix

\section{Proof of LCI Concentration Inequalities}
In this section we present the proof of LCI concentration inequalities used in the main paper.


\begin{proof}[Proof of Lemma \ref{thm:LCI_tail}]
Starting from the left-hand side, we have
\begin{align*}
\Prob[XY]{\sum_i (x_i - \mu_i) \geq \epsilon}
&= \Prob[XY]{\econst^{t\sum_i (x_i - \mu_i)} \geq \econst^{t\epsilon}}, \text{ for every $t > 0$} \\
&\leq \econst^{-t\epsilon} 
\cdot \Expect[XY]{\econst^{t\sum_i (x_i - \mu_i)}} \\
&= \econst^{-t\epsilon} 
\cdot \Expect[Y]{\Expect[X]{\econst^{t\sum_i (x_i - \mu_i)} \middle| Y}} \\
&= \econst^{-t\epsilon} \cdot \Expect[Y]{\prod_i \Expect[x_i]{\econst^{t (x_i - \mu_i)} \middle| Y}} \\
&\leq \econst^{-t\epsilon} \cdot \Expect[Y]{\prod_i \exp\left(\tfrac{1}{2} t^2 \sigma_i^2 \right)} \\
&= \exp\left( -t\epsilon + \tfrac{1}{2}t^2 \sum_i \sigma_i^2 \right), \text{ let $t = \frac{\epsilon}{\sum_i \sigma_i^2}$} \\
&= \exp\left( -\frac{\epsilon^2}{2\sum_i \sigma_i^2} \right) \,.
\end{align*}
The second line follows from Markov's inequality, the third line follows from the law of total expectation, the fourth line follows from the LCI assumption, and the fifth line follows from the assumption $\Expect[x_i]{\econst^{t (x_i - \mu_i)} \mid Y} \leq \econst^{t^2 \sigma_i^2 / 2}$.
This completes the proof.
\end{proof}

\begin{proof}[Proof of Corollary \ref{cor:LCI_hoeffding}]
By Hoeffding's lemma we have 
$\Expect[x_i]{\econst^{t (x_i - \mu_i)} \middle| Y} \leq \econst^{t^2 (b_i-a_i)^2 / 8}$. Setting $\sigma_i^2 = (b_i-a_i)^2/4$ in the statement of Theorem \ref{thm:LCI_tail} leads to the desired result.
\end{proof}


\begin{proof}[Proof of Corollary \ref{cor:LCI_bern}]
For any single $x_i$, by Taylor expansion and the assumption of $\abs{x_i} \leq M$ and $\Var[x_i]{x_i \mid Y} \leq \nu_i^2$, we have
\[
\Expect[x_i]{\econst^{tx_i} \middle| Y}
= \Expect[x_i]{1 + tx_i + \sum_{p=2}^{\infty} \frac{t^p x_i^p}{p!} \middle| Y}
\leq \exp\left(\frac{\nu_i^2 (\econst^{tM} - tM - 1)}{M^2}  \right) \,,
\]
for any $t > 0$.
Setting $\sigma_i^2 = \frac{2\nu_i^2 (\econst^{tM} - tM - 1)}{t^2 M^2}$ in the statement of Theorem \ref{thm:LCI_tail} leads to the desired result.
\end{proof}


Before we present the proof of LCI matrix Bernstein ineqauality, we first introduce the proof of Lemma \ref{corollary_3.7}, which is motivated by \citep{tropp2012user}.

\begin{proof}[Proof of Lemma \ref{corollary_3.7}]
Starting from the left-hand side, we have
\begin{align*}
\Prob[XY]{\lmax\left(\sum_i X_i\right) \geq \epsilon}
&= \Prob[XY]{\econst^{\theta \lmax\left(\sum_i X_i\right)} \geq \econst^{\theta \epsilon}}, \text{ for every $\theta > 0$} \\
&\leq \econst^{-\theta \epsilon} 
\cdot \Expect[XY]{\econst^{\theta \lmax\left(\sum_i X_i\right)}} \\
&\leq \econst^{-\theta \epsilon} 
\cdot \Expect[XY]{\tr(\econst^{\theta \sum_i X_i})} \\
&= \econst^{-\theta \epsilon} 
\cdot \Expect[Y]{\Expect[X]{\tr(\econst^{\theta \sum_i X_i}) \middle| Y}} \\
&\leq \econst^{-\theta \epsilon} 
\cdot \Expect[Y]{\tr  \exp \left(\sum_i \log \Expect[X_i]{\econst^{\theta X_i} \middle| Y} \right)  } \\ 
&\leq \econst^{-\theta \epsilon} \cdot \Expect[Y]{\tr \exp\left( g(\theta) \sum_i A_i \right)  }  \\
&= \econst^{-\theta \epsilon} \cdot \tr \exp\left( g(\theta) \sum_i A_i \right)   \\
&\leq d\econst^{-\theta \epsilon}  \cdot \lmax \left( \exp\left( g(\theta) \sum_i A_i \right) \right)  \\
&= d\econst^{-\theta \epsilon}  \cdot \exp\left( g(\theta) \lmax\left(\sum_i A_i \right)\right)  \,.
\end{align*}
The second line follows from Markov's inequality, the third line follows from the spectral mapping theorem, the fourth line follows from the law of total expectation, the fifth line follows from the LCI assumption and the fact that the matrix cumulant generating functions are subadditive, and the sixth line follows from the assumption  
$
\Expect[X_i]{\econst^{\theta X_i} \mid Y} \preceq \econst^{g(\theta) \cdot A_i}
$. 
This completes the proof.
\end{proof}

We now present the proof of LCI matrix Bernstein inequality.

\begin{proof}[Proof of Corollary \ref{thm:LCI_mtxBern}]
In this proof we assume $R =1$ for simplicity. The general case follows by scaling the corresponding terms.

For any single $X_i$, by Taylor expansion and the assumption of $\Expect[X_i]{X_i \mid Y} = \zeromtx$, we have
\[
\Expect[X_i]{\econst^{\theta X_i} \middle| Y}
= \Expect[X_i]{\Imtx + \theta X_i + \sum_{p=2}^{\infty} \frac{\theta^p X_i^p}{p!} \middle| Y}
\preceq \exp\left((\econst^{\theta} - \theta - 1) \cdot \Expect[X_i]{X_i^2 \middle| Y}   \right) \,,
\]
for any $\theta > 0$.
Then by Lemma \ref{corollary_3.7} we have
\begin{align*}
\Prob[XY]{\lmax\left(\sum_i X_i\right) \geq \epsilon}
&\leq d\econst^{-\theta \epsilon}  \cdot \exp\left( (\econst^\theta - \theta - 1) \cdot \lmax\left(\sum_i \Expect[X_i]{X_i^2 \middle| Y} \right)\right) \\
&\leq d\econst^{-\theta \epsilon}  \cdot \exp\left( (\econst^\theta - \theta - 1) \cdot \sigma^2 \right) \,.
\end{align*}
Setting $\theta = \log(1+\epsilon/\sigma^2)$ completes the proof.
\end{proof}
\section{Proofs for Polynomial-Time Regime with Semidefinite Programming}

\begin{proof}[Proof of Lemma \ref{lemma:dual}]
We define the Lagrangian variables $a_1, \dots, a_n \in \R, v_1,v_2 \in \R^n, \Lambda \succeq_\Snp 0, \Gamma \succeq_\Rnp 0$ for the constraints in \eqref{opt:primal} respectively. Then Lagrangian of \eqref{opt:primal} is
\begin{align*}
L(Y, a_1, \dots, a_n, v, \Lambda, \Gamma) 
= 
&\inprod[-W+\diag(a_1,\dots,a_n) + \onevct_n v_1^\top + v_2\onevct_n^\top - \Lambda - \Gamma]{Y} \\
&- \sum_i a_i - m (v_1+v_2)^\top \onevct_n \,.
\end{align*}
For simplicity we denote $A = \diag(a_1,\dots,a_n)$ to be a diagonal matrix. By the KKT stationarity condition and dual feasibility, we have
\begin{equation}
-W + A + \onevct_n v_1^\top + v_2\onevct_n^\top - \Gamma = \Lambda \succeq_\Snp 0\,.
\label{opt:stationarity}    
\end{equation}
Note that in the equation above, positive semidefiniteness requires symmetry. Thus we set $v:= v_1 = v_2$, and we require $\Gamma$ to be symmetric. Then we obtain the dual objective function $\tr(A) + 2mv^\top \onevct_n$. 

We now look at the remaining constraints. The KKT complementary slackness condition requires that 
\begin{equation}
\inprod[\Lambda]{Y} = 0
\label{opt:CSa}
\end{equation}
and
\begin{equation}
\Gamma_{ij} Y_{ij} = 0
\label{opt:CSb}
\end{equation}
for every $i$ and $j$. We want to highlight that \eqref{opt:CSa} is equivalent to $\Lambda Y = 0$, given that both matrices are positive semidefinite. Since $Y^\ast = Z^\ast Z^{\ast\top}$, this implies that for the optimal solution $Y^\ast$, every $z^{\ast(i)}$ is an eigenvector of $\Lambda$ with an eigenvalue of $0$. Furthermore $\eqref{opt:CSb}$ implies that $\Gamma_{S_i} = 0$ for all $i \in [k]$, because $Y^\ast_{S_i}$ is an all-one submatrix.
\end{proof}


\begin{proof}[Proof of Lemma \ref{lemma:certificates}]
Strong duality requires that the optimal primal and dual objective values are equal. In other words, the objective value of problem \eqref{opt:primal} and \eqref{opt:dual} should match. 
Note that the optimal primal solution $Y^\ast$ can be decomposed as $Y^\ast = Z^\ast Z^{\ast\top}$. Thus the primal objective function can be rewritten as 
\begin{align*}
\inprod[W]{Y^\ast} 
= \inprod[W]{Z^\ast Z^{\ast\top}}
= \sum_{i=1}^n \sum_{j=1}^n W_{ij} z_i^{\ast\top} z_j^\ast \,.
\end{align*}
On the other hand, the dual objective function is equal to
\[
\tr(A) + 2m v^\top \onevct_n  
= \tr(D) - mn\phi + mn \phi
=  \sum_{i=1}^n d_i \,.
\]
Recall that $d_i = \sum_{j=1}^n W_{ij} z_i^{\ast\top} z_j^\ast$ and $D = \diag(d_1, \dots, d_n)$. One can see that by setting $a_i = d_i - m \phi$, or $A = D - m \phi \Imtx$, the duality gap is closed. 
One may notice that the choice of $\phi$ does not change the objective values here. For the sole purpose of strong duality, $\phi$ is an arbitrary constant that will be determined later.
\end{proof}


\begin{proof}[Proof of Lemma \ref{lemma:df}]
This directly follows from the constraint in \eqref{opt:dual}, by plugging in the construction of $v, A$ and $\Gamma$.
\end{proof}


\begin{proof}[Proof of Lemma \ref{lemma:uniqueness}]
Again we use $Y^\ast = Z^\ast Z^{\ast\top}$ to denote the optimal primal solution. Since  $\Lambda$ and $Y^\ast$ are both positive semidefinite, the KKT complementary slackness condition \eqref{opt:CSa} is equivalent to $\Lambda Y^\ast = \Lambda Z^\ast Z^{\ast\top}= 0$, which implies that every $z^{\ast(i)}$ is an eigenvector of $\Lambda$ with an eigenvalue of $0$. Condition  \eqref{eq:eigen_condition} further requires that $\{z^{\ast(i)}\}_{i=1}^k$ spans the whole null space of $\Lambda$. As a result, any optimal primal solution $Y$ needs to be a multiple of $Y^\ast$. Since $Y_{ii} = 1$, the choice of $Y$ is unique.
\end{proof}


\begin{proof}[Proof of Lemma \ref{theorem:sdp_deterministic}]
This directly follows from Lemma \ref{lemma:df} and \ref{lemma:uniqueness}.
\end{proof}


\begin{proof}[Proof of Lemma \ref{lemma:choice_phi}]
Motivated by \citep{amini2018semidefinite}, in the following proof we introduce the notation $\bar{d}(S_l, S_r)$ to denote the average degree of connectivity between cluster $l$ and $r$. In other words, we have $\bar{d}(S_l, S_r) := \frac{1}{m} \sum_{i\in S_r} d_i(S_l) = \frac{1}{m} \sum_{j\in S_l} d_j(S_r)$.
Note that dual feasibility condition \eqref{eq:entrygeq0} is satisfied, if for every $i \in S_r, j \in S_l$, we have
\[
m\phi \geq d_i(S_l) + d_j(S_r) - \bar{d}(S_l, S_r) \,.
\]
By definition, dividing both sides by $m$, this is equivalent to
\[
\phi + (PW_{S_l S_r}P - W_{S_l S_r})_{ij} \geq 0 \,.
\]

One may note that each random variable $d_i(S_l)$ is the summation of $m$ LCI random variables given $X$, with the expectation $\Expect[WX]{d_i(S_l)} = mq$. Using LCI Hoeffding's inequality, we obtain
\[
\Prob[WX]{\abs{d_i(S_l) - \Expect[WX]{d_i(S_l)}} \geq t} \leq 2\exp\left( - \frac{2t^2}{m}\right) \,.
\]
Taking a union bound for all $S_l, S_r$ and $i\in S_r$ gives us 
\begin{align*}
\Prob[WX]{\max_i \abs{d_i(S_l) - \Expect[WX]{d_i(S_l)}} \geq t} 
&\leq 2m \binom{k}{2}\exp\left( - \frac{2t^2}{m}\right) \\
&\leq 2nk \exp\left( - \frac{2t^2}{m}\right) \\
& \leq O(n^{-1}) \,,
\end{align*}
where the last inequality holds if $t\geq O\left(\sqrt{\frac{n\log n}{k}}\right)$. 

Note that by definition, the average degree $\bar{d}(S_l, S_r)$ is always bounded between the minimum and the maximum of $d_i(S_l)$ and $d_j(S_r)$. Then with probability at least $1 - O(n^{-1})$, it follows that
\begin{align*}
&\quad \abs{d_i(S_l) + d_j(S_r) - \bar{d}(S_l, S_r) - \Expect[WX]{d_i(S_l) + d_j(S_r) - \bar{d}(S_l, S_r)}} \\
& = \abs{d_i(S_l) + d_j(S_r) - \bar{d}(S_l, S_r) - mq} \\
&\leq 3t \,.
\end{align*}
Dividing both sides by $m = \frac{n}{k}$, this is equivalent to 
\[
\abs{(PW_{S_l S_r}P - W_{S_l S_r})_{ij} - q}
\leq \frac{3kt}{n} 
\leq O\left(\sqrt{\frac{k\log n}{n}}\right) \,.
\]
Thus, by setting $\phi \geq q+O\left(\sqrt{\frac{k\log n}{n}}\right)$, nonnegativity is satisfied with probability at least $1 - O(n^{-1})$.
\end{proof}


\begin{proof}[Proof of Lemma \ref{lemma:sdp_expectation}]
Here we look at the expectation of $\Lambda$. Note that
\begin{align*}
\lambda_{k+1} (\Expect[WX]{\Lambda})
&=\lambda_{k+1} \left(\Expect[WX]{D-m\phi \Imtx - W + \phi \onemtx - \Gamma}\right) \\
&= \min_{\substack{u\perp z^{\ast(1)}, \dots, z^{\ast(n)} \\ u\in \sphere}}
u^\top \left(m(p - \phi)\Imtx - \Expect[WX]{\Gamma}\right) u \\
&= m (p-\phi) - \min_{\substack{u\perp z^{\ast(1)}, \dots, z^{\ast(n)} \\ u\in \sphere}}
u^\top \Expect[WX]{\Gamma} u \\
&=  m (p-\phi) - 
\min_{\substack{u_i, u_j\perp \onevct_{{m}} \\ \sum_i \norm{u_i}^2 = 1}}
\sum_{i\neq j}
u_i^\top \Expect[WX]{\Gamma_{S_i S_j}} u_j \,.
\end{align*}
Note that for each summand above, we have
\begin{align*}
u_i^\top \Expect[WX]{\Gamma_{S_i S_j}} u_j
&= u_i^\top \left(\phi\onemtxm + P\Expect[WX]{W_{S_i S_j}}P + \Expect[WX]{W_{S_i S_j}}\right) u_j \\
&= u_i^\top \left(\phi\onemtxm + qP\onemtxm P + q\onemtxm \right) u_j \\
&= 0 \,,
\end{align*}
given that $u_i$'s are orthogonal to $\onevct$. Thus we obtain 
\begin{equation}
\lambda_{k+1} \left(\Expect[WX]{\Lambda}\right) = m (p-\phi) \,.
\end{equation}
\end{proof}


\begin{proof}[Proof of Lemma \ref{lemma:dw}]
Starting from \eqref{eq:eigenk+1}, we have 
\begin{align}
\lambda_{k+1} (\Lambda)  \nonumber
&\geq \lambda_{k+1} (\Expect[WX]{\Lambda}) + \lambda_{k+1} (\Lambda - \Expect[WX]{\Lambda}) \nonumber \\
&\geq  m(p-\phi) \nonumber\\
&\quad+ \lambda_{k+1} \left(D-\Expect[WX]{D}\right) \label{eq:D_total} \\
&\quad+ \lambda_{k+1} \left(-W+\Expect[WX]{W}\right) \label{eq:W_total} \\
&\quad+ \lambda_{k+1} \left(-\Gamma+\Expect[WX]{\Gamma}\right) \,. \label{eq:Gamma_total} 
\end{align}

Regarding \eqref{eq:D_total}, note that $D$ is a diagonal matrix. As a result, $\lambda_{k+1} \left(D-\Expect[WX]{D}\right) \geq \min_i (d_i - \Expect[WX]{d_i}) $.

Regarding \eqref{eq:W_total}, it follows that $\lambda_{k+1} \left(-W+\Expect[WX]{W}\right) \geq -\lmax\left(W-\Expect[WX]{W}\right)$.

Regarding \eqref{eq:Gamma_total}, we have
\begin{align*}
\lambda_{k+1} \left(-\Gamma + \Expect[WX]{\Gamma}\right) 
&= \min_{\substack{u_i, u_j\perp \onevct_{m} \\ \sum_i \norm{u_i}^2 = 1}}
\sum_{i\neq j}
u_i^\top \left(-\Gamma_{S_i S_j} + \Expect[WX]{\Gamma_{S_i S_j}} \right) u_j \,.
\end{align*}
Note that 
\begin{align*}
u_i^\top \left(-\Gamma_{S_i S_j} + \Expect[WX]{\Gamma_{S_i S_j}}\right)u_j 
&= u_i^\top \left(-\phi\onemtxm - PW_{S_i S_j}P + W_{S_i S_j} + \phi\onemtxm - q \onemtxm\right)u_j \\
&= u_i^\top \left(- PW_{S_i S_j}P + W_{S_i S_j} - q\onemtxm\right)u_j \\
&= 0 \,,
\end{align*}
given that $u_i$'s are orthogonal to $\onevct$. Thus $\lambda_{k+1} \left(-\Gamma + \Expect[WX]{\Gamma}\right) = 0$.

Combining the results above, it is sufficient to prove that
\begin{align}
\lambda_{k+1} (\Lambda) 
\geq m(p-\phi) + \min_i (d_i - \Expect[WX]{d_i}) -\lmax\left(W-\Expect[WX]{W}\right) \geq 0 \,.
\end{align}
This gives us the result in the statement.
\end{proof}


\begin{proof}[Proof of Theorem \ref{sdp:theorem}]
Our proof relies on the use of LCI concentration inequalities. 
First we show that \eqref{eq:d_con} holds with high probability. Note that, for any fixed latent variable $X$ and any $i\in [n]$, we have $\Prob[WX]{d_i - \Expect[WX]{d_i \middle| X} \leq -t } \leq \exp(-2t^2 / n)$ by LCI Hoeffding's inequality. By a union bound, it follows that 
\begin{align*}
&\Prob[WX]{\min_i (d_i - \Expect[WX]{d_i}) \leq -t} 
\leq n\exp(-2t^2 / n) \,.
\end{align*}
Setting $t = \frac{1}{2}m(p-\phi)$, we obtain
\begin{align*}
\Prob[WX]{\min_i (d_i - \Expect[WX]{d_i}) \leq -\frac{1}{2}m(p-\phi)} 
\leq n\exp\left(- \frac{n(p-\phi)^2}{2k^2} \right) 
\leq  c_1 n^{-1} \,,
\end{align*}
where the last inequality holds given that $\left(\frac{p-\phi}{k}\right)^2 \geq 4 \frac{\log n}{n} - 2 \frac{\log c_1}{n}$.

Next we show that \eqref{eq:w_con} holds with high probability, and we use LCI Bernstein inequality in our proof. In this part we denote $\Bar{W}_{ij} := W_{ij} - \Expect[WX]{W_{ij}}$, and $\delta_{ij}$ to be the matrix with $1$ in entry $(i,j), (j,i)$, and $0$ everywhere else. Note that $\delta_{ij}^2$ is a matrix with $1$ in entry $(i,i),(j,j)$, and $0$ everywhere else. 
Furthermore we define the matrix $\Delta_{ij} := \Bar{W}_{ij} \delta_{ij}$. One can note that $\Delta_{ij}$'s are LCI random matrices given $X$, with the maximum eigenvalue bounded above by $1$.
Also note that for any given $X$, we have $\Expect[W]{\Delta_{ij} \middle| X} = \zeromtx$. By our construction, it follows that $\sum_{i<j} \Delta_{ij} = W - \Expect[WX]{W}$. Thus for any given $X$, it follows that $\norm{\sum_{i<j} \Expect[W]{\Delta_{ij}^2  \middle| X}} = \norm{\sum_{i<j} \delta_{ij}^2 \Expect[WX]{\Bar{W}_{ij}^2  \middle| X}} \leq n-1$.
Then applying the LCI matrix Bernstein inequality, we obtain
\begin{align*}
&\Prob[WX]{\lmax(W - \Expect[WX]{W}) \geq t} 
\leq  n\exp\left(-\frac{t^2/2}{n + t/3}\right) \,.
\end{align*}
Setting $t = \frac{1}{2}m(p-\phi)$, we obtain
\begin{align*}
\Prob[WX]{\lmax(W - \Expect[WX]{W}) \geq \frac{1}{2}m(p-\phi)}
\leq n\exp\left(-\frac{n (p-\phi)^2}{16k^2}\right) 
\leq c_2 n^{-1} \,,
\end{align*}
where the last inequality holds given that $\left(\frac{p-\phi}{k}\right)^2 \geq 32 \frac{\log n}{n} - 16 \frac{\log c_2}{n}$.

Combining the results above, the probability of $\lambda_{k+1}(\Lambda)$ being greater than zero is at least $1-(c_1+c_2)n^{-1}$, as long as $\left(\frac{p-\phi}{k}\right)^2 \geq 32 \frac{\log n}{n}$. The last remaining task is to take $\phi$ into account. By Lemma \ref{lemma:choice_phi}, setting $\phi = q+c\sqrt{\frac{k\log n}{n}}$ for some constant $c$ gives us
\[
\left(p-q-c\sqrt{\frac{k\log n}{n}}\right)^2 \geq 32k^2 \frac{\log n}{n} \,.
\]
Simplification leads to
\[
(p-q)^2 \geq \left(2c(p-q) + 32k\sqrt{\frac{k\log n}{n}}\right) \sqrt{\frac{k\log n}{n}} \,.
\]
To further simplify the bound above we consider two cases. If $2c(p-q) \geq 32k\sqrt{\frac{k\log n}{n}}$, a sufficient condition will be $(p-q)^2 \geq 16c^2 k \frac{\log n}{n}$. On the other hand if $2c(p-q) \leq 32k\sqrt{\frac{k\log n}{n}}$, a sufficient condition will be $(p-q)^2 \geq 64k^2 \frac{\log n}{n}$. 
Thus for either case, $\frac{(p-q)^2}{k^2} \geq  c'\frac{\log n}{n}$, for some large constant $c'$, is a sufficient condition. This completes our proof.
\end{proof}
\section{Proofs for Additional Analysis}

\subsection{Proof of Claim \ref{fano:theorem}}
In the following proof, we use notation $\mathcal{Z}$ to denote the space of feasible solutions. Mathematically, we have the following definition
\[
\mathcal{Z} = \lbrace Z \mid Z \in \{0,1\}^{n\times k}, Z^\top \onevct_n = \frac{n}{k}  \onevct_k, Z\onevct_k = \onevct_n \rbrace \,, 
\]
and we assume that the groundtruth $Z^\ast$ is sampled uniformly at random from $\mathcal{Z}$.

\begin{proof}
First we characterize the mutual information between the true labels $Z^\ast$ and the observed matrix $W$. Using the pairwise KL-based bound \citep{yu1997assouad}, we obtain
\begin{align}
I(Z^\ast, W) \nonumber  
&\leq  \frac{1}{\abs{\mathcal{Z}}^2} \sum_{Z, Z' \in \mathcal{Z}} \kl(\Prob[\mtx{WX}]{W \mid Z} \mid \Prob[\mtx{WX}]{W \mid Z'}) \nonumber\\
&\leq \max_{Z, Z' \in \mathcal{Z}} \kl(\Prob[\mtx{WX}]{W \mid Z} \mid \Prob[\mtx{WX}]{W \mid Z'}) \nonumber\\
&\leq \binom{n}{2} \max_{Z, Z'} \kl(\Prob[\mtx{WX}]{W_{ij} \mid z_i, z_j} \mid \Prob[\mtx{WX}]{W_{ij} \mid z'_i, z'_j}) \nonumber\\
&\leq \binom{n}{2} \sum_{W_{ij}} \Prob[\mtx{WX}]{W_{ij} \mid z_i, z_j} \log \frac{\Prob[\mtx{WX}]{W_{ij} \mid z_i, z_j}}{\Prob[\mtx{WX}]{W_{ij} \mid z'_i, z'_j}} \nonumber\\
&\leq \binom{n}{2}  \Prob[\mtx{WX}]{W_{ij} = 1 \mid z_i, z_j} \log \frac{\Prob[\mtx{WX}]{W_{ij} = 1 \mid z_i, z_j}}{\Prob[\mtx{WX}]{W_{ij} = 1 \mid z'_i, z'_j}} \nonumber\\
&\leq  \binom{n}{2} \Expect[\mtx{X}]{f(\vct{x_i}, \vct{x_j}) \mid z_i = z_j} 
\log \frac{\Expect[\mtx{X}]{f(\vct{x_i}, \vct{x_j}) \mid z_i = z_j}}{\Expect[\mtx{X}]{f(\vct{x_i}, \vct{x_j}) \mid z_i \neq z_j}} \nonumber \\
&= \binom{n}{2} p \log \frac{p}{q} \nonumber\\
&\leq \frac{n^2}{2} p \log \frac{p}{q} \,, \nonumber
\end{align}
where $\kl(\cdot \mid \cdot)$ denotes the KL-divergence between two probability distributions. Then we can apply Fano's inequality \citep{cover2012elements}. For any predicted labels $\hat{Z}$, we have
\begin{align*}
\Prob{\hat{Z} \neq Z^\ast}
&\geq 1 - \frac{I(Z^\ast, W) + \log 2}{\log \abs{\mathcal{Z}}} \,.
\end{align*}

By definition of $\mathcal{Z}$ and counting, it follows that
\begin{align*}
\abs{\mathcal{Z}} &= \frac{n!}{k!(m!)^k} \,.
\end{align*}

Note that $\sqrt{n}(n/\econst)^n \leq n! \leq \econst\sqrt{n}(n/\econst)^n$. It follows that
\begin{align*}
\abs{\mathcal{Z}} 
\geq &\frac{\sqrt{n}(n/\econst)^n}{\econst \sqrt{k} (k/\econst)^k \econst^k \sqrt{m}^k (m/\econst)^{mk}} 
\geq  n^{\frac{1-k}{2}} k^{n-\frac{1+k}{2}} \econst^{k-1},
\end{align*}
which indicates that
\begin{align*}
\log \abs{\mathcal{Z}} 
\geq \frac{1}{2} \left(nk+n-k^2+k\right) 
\geq \frac{1}{4}nk \,,
\end{align*}
and the last inequality holds under the mild assumption of $n \geq 2k$.

Finally, by Fano's inequality, for the probability of error to be at least $1/2$, it is sufficient to require the lower bound to be greater than $1/2$. Hence 
\begin{align}
\Prob{\hat{Z} \neq Z^\ast}
&\geq 1 - \frac{I(Z^\ast, W) + \log 2}{\log \abs{\mathcal{Z}}} \nonumber \\
&\geq 1 - \frac{\frac{n^2}{2} p \log (p/q) + \log 2}{\frac{1}{4}nk} \nonumber \\
&= 1 - 2np\frac{\log (p/q)}{k} - \frac{4}{nk} \nonumber \\
&= \frac{nk-4}{nk} - 2np\frac{\log (p/q)}{k} \nonumber \\
&\geq \frac{1}{2} \,, \nonumber
\end{align}
and the last inequality holds provided that $\frac{p}{k}\log(p/q) \leq \frac{1}{8n}$ and $nk \geq 8$.
\end{proof}

\subsection{Proof of Claim \ref{hard:theorem}}
In the following proof we define $d(\my) = \langle \mya, \mya - \my \rangle \geq 0$. Before we start our proof we first present the following result.

\begin{lemma}[Lemma 1.1, \citep{chen2014statistical}]
For each $t\in [2(n/k - 1), n^2/k]$, we have
\[
\abs{\{\my \in \mathcal{Y}: d(\my) = t\}} 
\leq
\frac{25k^2 t^2}{n^2} n^{20kt/n} \,.
\]
\label{lemma:chen}
\end{lemma}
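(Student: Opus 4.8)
The plan is to parametrize $\my = \vy\vy^\top$ by the set of vertices on which $\vy$ disagrees with $\vya$, reduce $\abs{\{\my : d(\my) = 2t\}}$ to a single binomial coefficient, and then bound that coefficient crudely by Stirling. Let $S = \{i : \vy_i \ne \vya_i\}$ be the flip set and $k = \abs{S}$. Since an off-diagonal entry satisfies $\my_{ij} = \vy_i\vy_j \ne \vya_i\vya_j = \mya_{ij}$ exactly when precisely one of $i,j$ lies in $S$, the distance $d(\my)$ (the number of off-diagonal pairs on which $\my$ and $\mya$ differ) reduces to $k(n-k)$. The balance constraint $\sum_i \vy_i = \sum_i \vya_i = 0$ forces $S$ to contain equally many $+1$- and $-1$-labeled coordinates of $\vya$; hence $k = 2a$ is even, and $S$ is formed by choosing $a$ of the $n/2$ positive and $a$ of the $n/2$ negative coordinates, giving $\binom{n/2}{a}^2$ choices.

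First I would pin down the exact cardinality. Because $\vy$ and $-\vy$ yield the same $\my$, and $-\vy$ has flip set $S^c$ of size $n-k$, the labelings with $k$ flips and with $n-k$ flips produce identical matrices; I therefore restrict to $k \le n/2$, i.e.\ $a \le n/4$, where the assignment $S \mapsto \my$ is injective. On this range $t = d(\my)/2 = k(n-k)/2 = a(n-2a)$ is strictly increasing in $a$ (its derivative $n - 4a$ is positive), so each admissible $t$ determines a unique $a$. Consequently $\abs{\{\my : d(\my) = 2t\}} = \binom{n/2}{a}^2$ for that unique $a$, and is $0$ when no integer $a$ solves $a(n-2a) = t$. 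This also explains the stated range: the smallest flip, $a = 1$, yields $t = n-2$.

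It then remains to verify $\binom{n/2}{a}^2 \le \frac{100t^2}{n^2}\, n^{40t/n}$. Using $\binom{n/2}{a} \le (en/(2a))^a$ together with $a \ge 1$, I get $\binom{n/2}{a}^2 \le (en/2)^{2a} \le n^{2a}\cdot 2^{2a} \le n^{3a}$ whenever $n \ge 4$. On the restricted range $a \le n/4$ one has $n - 2a \ge n/2$, so the exponent in the claimed bound satisfies $40t/n = 40a(n-2a)/n \ge 20a$, whence $n^{40t/n} \ge n^{20a} \ge n^{3a}$. The polynomial prefactor only helps, since $100t^2/n^2 \ge 100(n-2)^2/n^2 \ge 25 \ge 1$ throughout the range for $n \ge 4$. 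Combining these three facts closes the inequality with enormous slack, uniformly over $t \in [n-2, n^2/2]$.

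The conceptual content here is light, so the main care lies in the bookkeeping. The step I expect to demand the most attention is establishing the exact cardinality, namely justifying the $\vy \sim -\vy$ identification and the injectivity of $S \mapsto \my$, and handling the boundary case $k = n/2$ (where the map becomes two-to-one, which only strengthens the upper bound). A secondary point is confirming that $d(\my)$ indeed reduces to $k(n-k)$ under whatever normalization the paper adopts for the distance; once that constant factor is fixed, matching it to $t$ via $t = k(n-k)/2$ is routine, and the Stirling comparison above then finishes the argument.
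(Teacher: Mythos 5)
Your proof is correct, but there is nothing in the paper to compare it against: the lemma appears in the appendix ``Technical Lemmas,'' which explicitly only \emph{reproduces} statements used in the proofs, and this one is imported from \citet{chen2014statistical} without proof. So your argument stands as a self-contained derivation for the two-balanced-cluster case, and its combinatorial core is sound: parametrizing $\my$ by the flip set $S$, using the $\vy \sim -\vy$ identification to restrict to $\abs{S} \le n/2$, noting that $t$ is strictly increasing in $a$ on that range so each $t$ contributes a single term $\binom{n/2}{a}^2$, handling the two-to-one boundary at $k = n/2$ (which only lowers the count), and closing with $\binom{n/2}{a}^2 \le n^{3a} \le n^{40t/n}$ together with $100t^2/n^2 \ge 25$. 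The one loose end you flagged --- the normalization of $d(\my)$ --- resolves in your favor: under the paper's definition $d(\my) = \langle \mya, \mya - \my \rangle$, each differing entry contributes $2$ and entries are counted as ordered pairs, so a flip set of size $k = 2a$ gives $d(\my) = 4k(n-k)$, hence $t = 2k(n-k) = 4a(n-2a)$, a factor of $4$ larger than your $t = a(n-2a)$. Since the right-hand side of the claimed bound grows with $t$, the corrected normalization only adds slack: $40t/n \ge 80a$ (using $n-2a \ge n/2$) and the prefactor is $\ge 400$ at realizable $t$, so your chain of inequalities closes with even more room. This normalization also confirms itself against the stated range, since the upper endpoint $t = n^2/2$ is attained exactly at $k = n/2$, while the lower endpoint $n-2$ becomes vacuous for balanced flips. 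One caution worth recording if you trace how the lemma is actually invoked: in the proof of Theorem~\ref{hard:theorem} the union bound starts at $t = 2(n-1)$, which corresponds to a single-coordinate flip --- not balanced, hence not in $\mathcal{Y}$ --- because the MLE in \eqref{opt:mle} optimizes over all of $\{\pm 1\}^n$. Your count adapts painlessly to that setting ($\binom{n}{k} \le n^{2k}$ against $n^{40t/n}$ with $t = 2k(n-k)$), but the lemma as stated covers only the balanced set; that mismatch is in the paper's usage, not in your proposal.
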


Our proof consists of two steps. We first show the deterministic condition for problem \eqref{nphard:primal} to succeed, and then derive the statistical condition by bounding $\mw$ from its expectation $\Expect[\mall]{\mw}$. We present the following lemma.
\begin{lemma}
If the following condition
\begin{align}
(p-q)d(\my) + \langle \mw - \Expect[\mall]{\mw}, \mya - \my \rangle &> 0 \label{hard:mainA} 
\end{align}
\label{hard:lemma_decomposition}
holds, then maximum likelihood estimation \eqref{nphard:primal} achieves exact inference.
\end{lemma}

\begin{proof}
To prove problem \eqref{nphard:primal} returns the optimal solution, it is sufficient to prove that for every $\my \neq \mya$, $\langle \mw, \mya - \my \rangle$ is strictly positive. Note that 
\begin{align}
\langle \mw, \mya - \my \rangle
=  \langle \mw - \Expect[\mall]{\mw}, \mya - \my \rangle 
+ \langle \Expect[\mall]{\mw}, \mya - \my \rangle 
\label{nphard:conc}
\end{align}

Regarding the last term in \eqref{nphard:conc}, note that $\Expect[\mall]{\mw} = q\onemtx + (p-q)\mya $. Given the fact that $\normf{\mya} = \normf{\my}$ for every $\my \in \mathcal{Y}$, we have 
\begin{align*}
\langle \Expect[\mall]{\mw}, \mya - \my \rangle 
&=  (p-q) d(\my) . 
\end{align*}
\end{proof}

We now present the proof of Theorem \ref{hard:theorem}.
\begin{proof}
To show that \eqref{hard:mainA} holds with high probability, we use LCI Bernstein inequality in our proof. For any fixed collection of latent variables $\mtx{X}$ and any $i\neq j$, $(\mw_{ij} - \Expect[\mall]{\mw_{ij}})(\mya_{ij} - \my_{ij})$ is a Bernoulli random variable centered at $0$, bounded between $-1$ and $1$, with a variance bounded above by $\frac{1}{4}$. Thus $\langle \mw - \Expect[\mall]{\mw}, \mya - \my \rangle  =   2\sum_{i < j} (\mw_{ij} - \Expect[\mall]{\mw_{ij} })(\mya_{ij} - \my_{ij}) $ is the summation of $\frac{1}{2} d(\my)$ LCI random variables given $X$. LCI Bernstein inequality implies
\[
\Prob[\mall]{\sum_{i < j} (\mw_{ij} - \Expect[\ma]{\mw_{ij}})(\mya_{ij} - \my_{ij}) \leq - t} 
\leq
\exp\left(-\frac{t^2}{2t/3 +d(\my) / 4}\right),
\]
for every $t > 0$. 

Setting $t = (p-q) d(\my)$, it follows that
\begin{align}
\Prob[\mtx{WX}]{\langle \mw - \Expect[\ma]{\mw \mid \mx}, \mya - \my \rangle \leq -(p-q) d(\my)} 
\leq 
\exp\left(-\frac{1}{7}  (p-q)^2 d(\my)\right). \nonumber
\end{align}

By a union bound we obtain
\begin{align*}
&\Prob[\mall]{\exists \my \neq \mya: \langle \mw - \Expect[\ma]{\mw \mid \mx}, \mya - \my \rangle \leq - \frac{1}{4}(p-q) d(\my)} \\
\leq & \sum_{t = 2(n/k - 1)}^{n^2/k} \abs{\{\my\in\mathcal{Y}: d(\my) = t\}} \exp\left(-\frac{1}{7} (p-q)^2  t\right) \\
\leq & \sum_{t = 2(n/k - 1)}^{n^2/k} \frac{25k^2 t^2}{n^2} n^{20kt/n} \exp\left(-\frac{1}{7} (p-q)^2  t\right) \\
\leq & \sum_{t = 2(n/k - 1)}^{n^2/k} 25{n^2} n^{20kt/n} \exp\left(-\frac{1}{7} (p-q)^2  t\right) \\
\leq & \sum_{t = 2(n/k - 1)}^{n^2/k} 25{n^2} \frac{k}{25} n^{-5} \\
\leq & n^{-1} ,
\end{align*}
where the third line follows Lemma \ref{lemma:chen}, and the second to last inequality holds given that $\frac{(p-q)^2}{k} \geq 175 \frac{\log n}{n}$.


Finally applying Lemma \ref{hard:lemma_decomposition}, the probability of $\langle \mw, \mya - \my \rangle$ being greater than zero is at least $1 - n^{-1}$. This completes our proof.
\end{proof}


\section{Experiments}
In this section, we validate our theoretical findings through synthetic experiments.
Here we compare the theoretic exact inference condition suggested by our SDP analysis, and the experimental results of exact inference using CVX \citep{cvx, gb08} to solve the SDP problem. We run synthetic experiments on four models: latent space model with three clusters, latent space model with two clusters, exchangeable graph model with two clusters, and kernel latent variable model with two clusters. 

\textbf{Latent space model with three clusters (Fig. \ref{fig:lsm3cluster}).} We pick $\mathcal{X} = \R^3$ as the latent domain. We fix the number of entities $n$ to be $30$. 
We generate $Z^\ast$ by randomly assigning entities to three groups of equal size. 
We generate the latent variables using Gaussian distributions, such that $P_{1} = \mathcal{N}_3((\norm{\mu},0,0), \sigma^2 \Imtx)$, $P_{2} = \mathcal{N}_3((0,\norm{\mu},0), \sigma^2 \Imtx)$, $P_{3} = \mathcal{N}_3((0,0,\norm{\mu}), \sigma^2 \Imtx)$, and $f(\vct{x}, \vct{x}') = \exp(-\normsq{\vct{x} - \vct{x'}})$.
The parameters in our simulations are $\norm{\vct{\mu}}$ and $\sigma$.
Each entry $\mtx{W}_{ij}$ follows Bernoulli distribution with probability $f(\vct{x}_i, \vct{x}_j)$. 
For each pair of $\norm{\vct{\mu}}$ and $\sigma$, we count: a) how many times (out of $10$) the fourth smallest eigenvalue of $\Lambda$ is greater than zero, and b) how many times (out of $5$) CVX returns the correct $\my = \mya = Z^\ast Z^{\ast\top}$. This allows us to compute an empirical probability of success for the statistical condition $\lambda_{k+1}(\Lambda) > 0$ and CVX, respectively.
Our experiments show that if the fourth smallest eigenvalue is strictly positive, then exact inference can be performed efficiently by semidefinite programming. 

\textbf{Latent space model with two clusters (Fig. \ref{fig:lsm}).} We pick $\mathcal{X} = \R^2$ as the latent domain. We fix the number of entities $n$ to be $150$.
Note that in the two cluster case, we can let the group assignment matrix $Z$ become a vector by using the $\{+1,-1\}$ encoding. We generate $Z^\ast$ by randomly assigning $n/2$ entities to one group ($z^\ast_i = 1$), and $n/2$ entities to the other group ($z^\ast_i = -1$). Since we are using the $\{+1,-1\}$ encoding, we only need to check the second smallest eigenvalue $\lsec(D-W) > 0$ as the sufficient condition. 
We generate the latent variables using Gaussian distributions, such that $P_{1} = \mathcal{N}_2(\vct{\mu}, \sigma^2 \Imtx)$, $P_{2} = \mathcal{N}_2(-\vct{\mu}, \sigma^2 \Imtx)$, where $\mathcal{N}$ denotes the Gaussian distribution. We also set $f(\vct{x}, \vct{x}') = \exp(-\normsq{\vct{x} - \vct{x'}})$. 
The parameters in our simulations are $\norm{\vct{\mu}}$ and $\sigma$.
Each entry $\mtx{W}_{ij}$ follows Bernoulli distribution with probability $f(\vct{x}_i, \vct{x}_j)$. 
For each pair of $\norm{\vct{\mu}}$ and $\sigma$, we count: a) how many times (out of $10$) the second smallest eigenvalue of $D-W$ is greater than zero, and b) how many times (out of $5$) CVX returns the correct $\my = \mya = Z^\ast Z^{\ast\top}$. Our experiments show that if the second smallest eigenvalue is strictly positive, then exact inference can be performed efficiently by semidefinite programming.

\textbf{Exchangeable graph model with two clusters (Fig. \ref{fig:hamming}).} We pick $\mathcal{X} = \{0,1\}^{32}$ as the latent domain. We fix the number of entities $n$ to be $150$. 
We generate $Z$ using the same method as in the latent space model with two clusters.
We generate the latent variables as follows: for every $x_i \in \{0,1\}^{32}$, its digits follow Bernoulli distribution with parameter $\alpha$, if entity $i$ is in the first group; its digits follow Bernoulli distribution with parameter $1-\alpha$, if entity $i$ is in the second group.  
We set $f(\vct{x}, \vct{x}') = \exp(-\norm{\vct{x} - \vct{x'}}_1 / \beta)$. 
The parameters in our simulations are $\alpha$ and $\beta$.
Each entry $\mtx{W}_{ij}$ follows Bernoulli distribution with probability $f(\vct{x}_i, \vct{x}_j)$. 
Our experiments show that if the second smallest eigenvalue is strictly positive, then exact inference can be performed efficiently by semidefinite programming. 

\textbf{Kernel latent variable model with two clusters (Fig. \ref{fig:kernel}).} We pick $\mathcal{X}$ to be the power set of $\{1,\dots,32\}$ as the latent domain. We fix the number of entities $n$ to be $150$. 
We generate $Z$ using the same method as in the latent space model with two clusters.
We generate the latent variables as follows: every $x_i$ is a subset of $\{1,\dots,32\}$. Each element $1$ through $16$ is in set $x_i$ with probability $p$ if entity $i$ is in the first group, and with probability $1-p$ if entity $i$ is in the second group.
Each element $17$ through $32$ is in set $x_i$ with probability $1-p$ if entity $i$ is in the first group, and with probability $p$ if entity $i$ is in the second group.
We set the kernel $K(\vct{x}, \vct{x}') = 2^{\abs{x \cap x'}}$, and $f(\vct{x}_i, \vct{x}_j) = (\log_2 K(x_i,x_j)) / 32$.
The parameters in our simulations are $p$ and $\alpha$.
Each entry $\mtx{W}_{ij}$ follows Beta distribution with parameters $(\alpha, \alpha\frac{1-f(\vct{x}_i, \vct{x}_j)}{f(\vct{x}_i, \vct{x}_j)})$. 
Our experiments show that if the second smallest eigenvalue is strictly positive, then exact inference can be performed efficiently by semidefinite programming. 

\begin{figure}[!ht]
    \centering
    \includegraphics[width=0.45\textwidth]{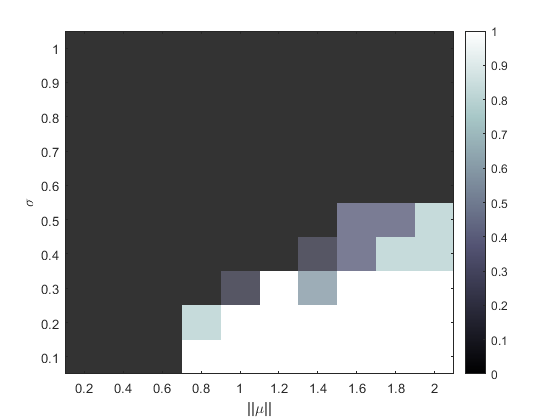}
    \includegraphics[width=0.45\textwidth]{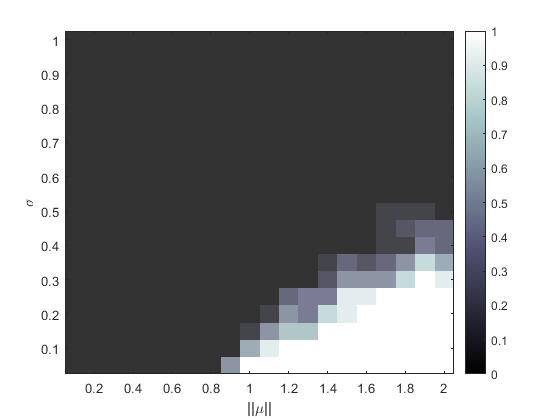}
    \caption{Latent space model with three clusters. \textbf{Left:} empirical probability of SDP returning the ground truth. \textbf{Right:} empirical probability of fourth smallest eigenvalue being positive. A brighter color indicates a higher percentage of successful trials.}
    \label{fig:lsm3cluster}
\end{figure}

\begin{figure}[!ht]
    \centering
    \includegraphics[width=0.45\textwidth]{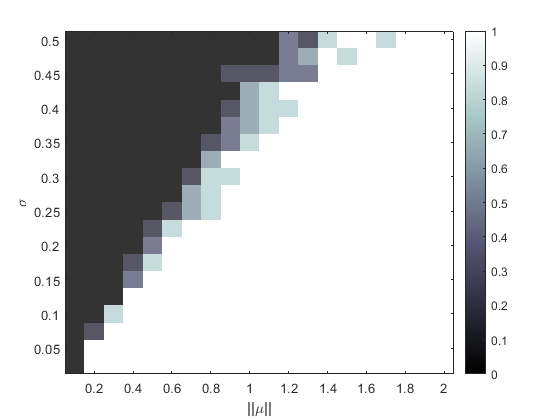}\includegraphics[width=0.45\textwidth]{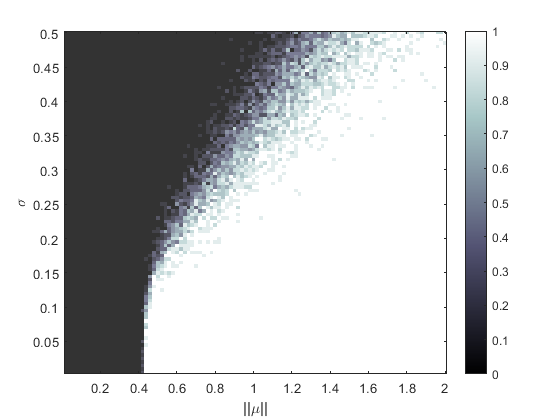}
    \caption{Latent space model with two clusters. \textbf{Left:} empirical probability of SDP returning the ground truth. \textbf{Right:} empirical probability of second smallest eigenvalue being positive. }
    \label{fig:lsm}
\end{figure}

\begin{figure}[!ht]
    \centering
    \includegraphics[width=0.45\textwidth]{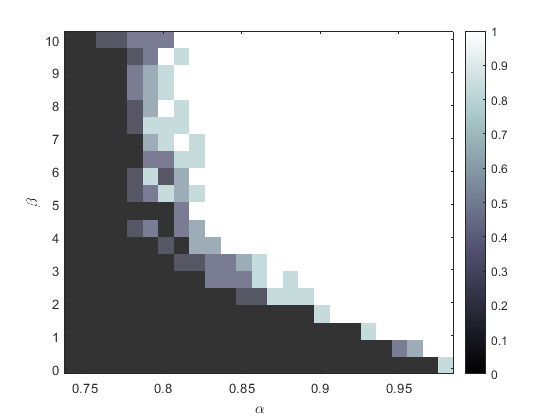}\includegraphics[width=0.45\textwidth]{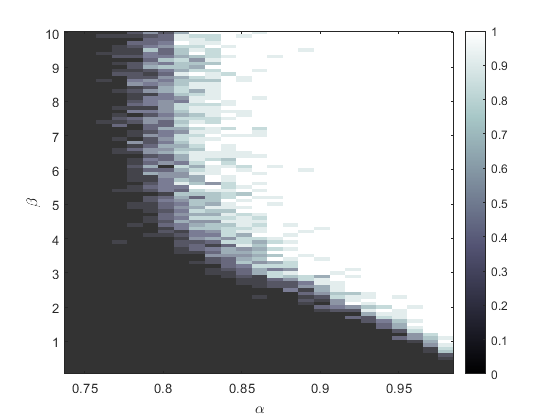}
    \caption{Exchangeable graph model with two clusters. \textbf{Left:} empirical probability of SDP returning the ground truth. \textbf{Right:} empirical probability of second smallest eigenvalue being positive.}
    \label{fig:hamming}
\end{figure}

\begin{figure}[!ht]
    \centering
    \includegraphics[width=0.45\textwidth]{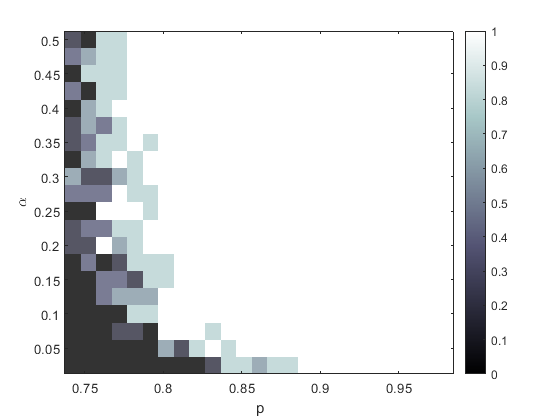}\includegraphics[width=0.45\textwidth]{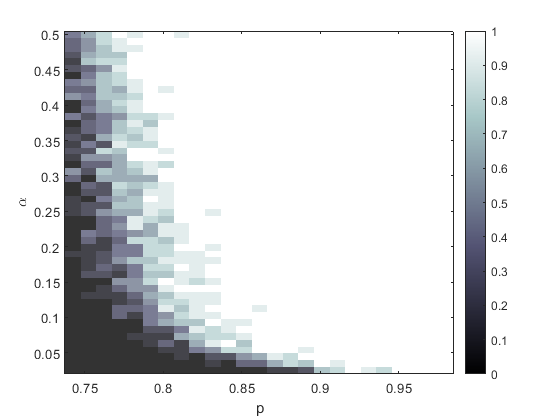}
    \caption{Kernel latent variable model with two clusters. \textbf{Left:} empirical probability of SDP returning the ground truth. \textbf{Right:} empirical probability of second smallest eigenvalue being positive.}
    \label{fig:kernel}
\end{figure}

\begin{figure}[!ht]
    \centering
    \includegraphics[width=0.45\textwidth]{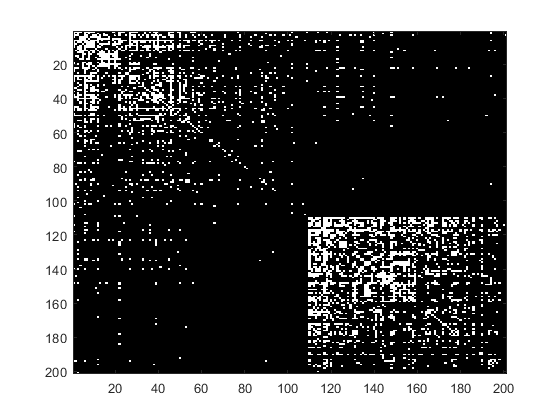}\includegraphics[width=0.45\textwidth]{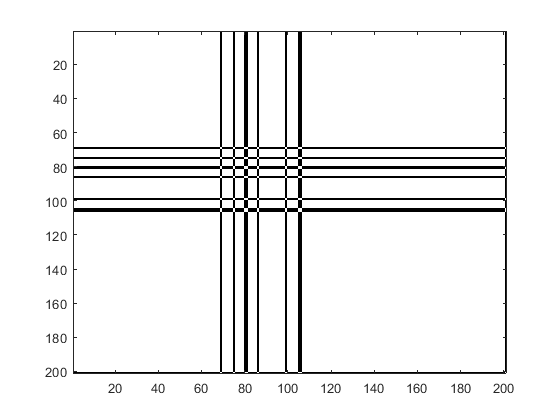}
    \caption{{\bf Left}: the adjacency matrix, in which each white entry represents an edge. {\bf Right}: comparison between the recovered $\text{sign}({\my})$ and the ground truth $\mya$, in which each black entry represents an error.}
    \label{fig:realdata}
\end{figure}

\subsection{Larger Number of Entities}
Here we provide synthetic experiment results for a large number of entities with $n=5000$ in the latent space model with two clusters. We pick $\mathcal{X} = \R^2$ as the latent domain, $\norm{\vct{\mu}}$ to be $1$, and the number of trials to be $10$. 
We compute the second minimum eigenvalue with $\sigma$ being $0.05$ and $0.3$. With $\sigma = 0.05$, the number of runs with positive second minimum eigenvalue is $10$ (out of $10$). With $\sigma = 0.3$, the number of runs with positive second minimum eigenvalue is $0$ (out of $10$). We also run SDP for both cases. With $\sigma = 0.05$ the number of runs where SDP succeeded is $10$ (out of $10$). With $\sigma = 0.3$ the number of runs where SDP succeeded is $0$ (out of $10$). Both results (success for $\sigma=0.05$ and failure for $\sigma=0.3$) confirm our finding in Theorem \ref{sdp:theorem}.


\subsection{Real-world Data}
To test the adequacy of SDP in a real-world dataset in which assumptions might not necessarily hold, we use an openly available Stanford large network dataset, \emph{email-Eu-core} \citep{snapnets}. In our experiments we used CVX \citep{cvx, gb08} as the solver.

The procedure is as follows. We select the two largest clusters from the dataset as the test data. The size of the test data is $n = 201$, and the sizes of the two clusters are $109$ and $92$, respectively. The adjacency matrix is shown in Figure \ref{fig:realdata}. Note that in the diagonal blocks in the adjacency matrix, the distribution of edges is not uniform, and seem to depend highly on the entities. That is, some rows are more dense than other rows, indicating that some entities might be closer (in a latent space) to other entities. We run SDP with the adjacency matrix and obtain the solution ${\my}$. We then set $\my = \text{sign}({\my})$ as the output of the algorithm.
Comparing our test result with the ground truth, our algorithm achieved an accuracy of 95.52\%.

For comparison, we ran the same real-world experiment using Kernighan-Lin algorithm with random initialization for $100$ iterations. The average accuracy was 52.91\%, with a standard error of 0.21\%.


\end{document}